\newtheorem{theorem}{Theorem}%[section]
\newtheorem{corollary}{Corollary}
\theoremstyle{definition}
\newtheorem{definition}{Definition}
\newtheorem{example}{Example}
\newcommand{\p}{\mbox{\boldmath$\rho$}}
\newcommand{\rme}{\textnormal{e}}
\DeclareMathOperator{\Vect}{Vect}
\DeclareMathOperator{\Hom}{Hom}
\DeclareMathOperator{\Div}{Div}
\font\black=cmbx10 \font\sblack=cmbx7 \font\ssblack=cmbx5 \font\blackital=cmmib10  \skewchar\blackital='177
\font\sblackital=cmmib7 \skewchar\sblackital='177 \font\ssblackital=cmmib5 \skewchar\ssblackital='177
\font\sanss=cmss10 \font\ssanss=cmss8 %scaled 900
\font\sssanss=cmss8 scaled 600 \font\blackboard=msbm10 \font\sblackboard=msbm7 \font\ssblackboard=msbm5
\font\caligr=eusm10 \font\scaligr=eusm7 \font\sscaligr=eusm5  \font\fraktur=eufm10
\font\sfraktur=eufm7 \font\ssfraktur=eufm5 
\font\bsymb=cmsy10 scaled\magstep2
\def\all#1{\setbox0=\hbox{\lower1.5pt\hbox{\bsymb
       \char"38}}\setbox1=\hbox{$_{#1}$} \box0\lower2pt\box1\;}
\def\exi#1{\setbox0=\hbox{\lower1.5pt\hbox{\bsymb \char"39}}
       \setbox1=\hbox{$_{#1}$} \box0\lower2pt\box1\;}
\def\tx#1{{\fam0\relax#1}}
\def\sss#1{{\fam\ssfam\relax#1}}
\def\hpb#1{\setbox0=\hbox{${#1}$}
    \copy0 \kern-\wd0 \kern.2pt \box0}
\def\vpb#1{\setbox0=\hbox{${#1}$}
    \copy0 \kern-\wd0 \raise.08pt \box0}
\def\pmb#1{\setbox0\hbox{${#1}$} \copy0 \kern-\wd0 \kern.2pt \box0}
\def\pmbb#1{\setbox0\hbox{${#1}$} \copy0 \kern-\wd0
      \kern.2pt \copy0 \kern-\wd0 \kern.2pt \box0}
\def\pmbbb#1{\setbox0\hbox{${#1}$} \copy0 \kern-\wd0
      \kern.2pt \copy0 \kern-\wd0 \kern.2pt
    \copy0 \kern-\wd0 \kern.2pt \box0}
\def\pmxb#1{\setbox0\hbox{${#1}$} \copy0 \kern-\wd0
      \kern.2pt \copy0 \kern-\wd0 \kern.2pt
      \copy0 \kern-\wd0 \kern.2pt \copy0 \kern-\wd0 \kern.2pt \box0}
\def\pmxbb#1{\setbox0\hbox{${#1}$} \copy0 \kern-\wd0 \kern.2pt
      \copy0 \kern-\wd0 \kern.2pt
      \copy0 \kern-\wd0 \kern.2pt \copy0 \kern-\wd0 \kern.2pt
      \copy0 \kern-\wd0 \kern.2pt \box0}
\mathchardef\za="710B  %\alpha
\mathchardef\zb="710C  %\beta
\mathchardef\zg="710D  %\gamma
\mathchardef\zd="710E  %\delta
\mathchardef\zve="710F %\epsilon
\mathchardef\zz="7110  %\zeta
\mathchardef\zh="7111  %\eta
\mathchardef\zvy="7112 %\theta
\mathchardef\zi="7113  %\iota
\mathchardef\zk="7114  %\kappa
\mathchardef\zl="7115  %\lambda
\mathchardef\zm="7116  %\mu
\mathchardef\zn="7117  %\nu
\mathchardef\zx="7118  %\xi
\mathchardef\zp="7119  %\pi
\mathchardef\zr="711A  %\rho
\mathchardef\zs="711B  %\sigma
\mathchardef\zt="711C  %\tau
\mathchardef\zu="711D  %\upsilon
\mathchardef\zvf="711E %\phi
\mathchardef\zq="711F  %\chi
\mathchardef\zc="7120  %\psi
\mathchardef\zw="7121  %\omega
\mathchardef\ze="7122  %\varepsilon
\mathchardef\zy="7123  %\vartheta
\mathchardef\zf="7124  %\varomega
\mathchardef\zvr="7125 %\varrho
\mathchardef\zvs="7126 %\varsigma
\mathchardef\zf="7127  %\varphi
\mathchardef\zG="7000  %\Gamma
\mathchardef\zD="7001  %\Delta
\mathchardef\zY="7002  %\Theta
\mathchardef\zL="7003  %\Lambda
\mathchardef\zX="7004  %\Xi
\mathchardef\zP="7005  %\Pi
\mathchardef\zS="7006  %\Sigma
\mathchardef\zU="7007  %\Upsilon
\mathchardef\zF="7008  %\Phi
\mathchardef\zW="700A  %\Omega
\mathchardef\zC="7009  %\Psi
\newcommand{\be}{\begin{equation}}
\newcommand{\ee}{\end{equation}}
\newcommand{\bea}{\begin{eqnarray}}
\newcommand{\eea}{\end{eqnarray}}
\def\*{{\textstyle *}}
\newcommand{\R}{{\mathbb R}}
\newcommand{\Z}{{\mathbb Z}}
\newcommand{\s}{{\textstyle *}}
\def\Hom{\sss{Hom}}
\def\ul{\underline}
\def\Vect{\sss{Vect}}
\def\xi{\tx{i}}
\def\s*{{\scriptstyle *}}
\def\cO{\mathcal{O}}
\def\ul{\underline}
\newcommand{\beas}{\begin{eqnarray*}}
\newcommand{\eeas}{\end{eqnarray*}}
\def\half{\frac{1}{2}}
\title{Modular Classes and Supersymmetric Berezin Volumes}
\author{Andrew James Bruce}  
   \email{andrewjamesbruce@googlemail.com}
   \date{\today}
\begin{document}
\begin{abstract}
We argue that modular classes of Q-manifolds provide an efficient method for addressing the existence of supersymmetric Berezin volumes in the supergeometric representation theory of the $\mathcal{N}=2$ $d=1$ supertranslation algebra. We establish a cohomological coherence criterion for the existence of a Berezin volume that is invariant under both of the supercharges. \par
\smallskip\noindent
{\bf Keywords:}{ Supermanifolds;~Q-manifolds;~ Modular classes;~Supersymmetry}\par
\smallskip\noindent
{\bf MSC 2020:~}{17B66;~58A50;~58C50} 
\end{abstract}

 \maketitle

%\setcounter{tocdepth}{1}
% \tableofcontents
%
%%%%%%%%%%%%%%%%%%%%%%%
%
In this short note, we highlight the application of the cohomology and modular classes of $Q$-manifolds  to the construction of supersymmetric Berezin volumes. Superspace methods provide a powerful geometric framework for constructing supersymmetric theories by generalising the notion of an action. A salient point is the behaviour of the Berezin volume/integration measure under supersymmetry transformations.  The standard constructions in superspace require the Berezin volume to be invariant under the action of the supercharges; checking this may be non-trivial.   For simplicity we concentrate on  the $\mathcal{N}=2$ $d=1$ \emph{supertranslation algebra} (see Nicolai \cite[Section 2]{Nicolai:1976} and Witten \cite[Section 3]{Witten:1982})
\begin{equation}\label{eqn:SupLieAlg}
    Q_1^2 = Q_2^2 =0\,, \qquad \{Q_1, Q_2\} = P\,, \qquad [P, Q_1] =0\,, \qquad [P,Q_2]=0\,.
\end{equation}
Note that via defining $Q_\pm := Q_1 \pm  Q_2$, the algebra \eqref{eqn:SupLieAlg} can be cast in the form
\begin{equation}\label{eqn:SupLieAlg2}
    Q^2_\pm = \pm P\,, \qquad \{Q_+, Q_-\} = 0\,, \qquad [P, Q_\pm] =0 \,,
\end{equation}
which is the $\mathcal{N}=(1,1)$ $d=1$ supertranslation algebra. These superalgebras are essential in supersymmetric classical and quantum mechanics; for an overview, the reader may consult Junker \cite{Junker:2019}. For more on the irreducible representations of $d=1$ extended superalgebras, the reader can consult \cite{Baulieu:2015,Pashnev:2001}.  We will concentrate on \eqref{eqn:SupLieAlg} in this letter and comment on \eqref{eqn:SupLieAlg2} in due course.\par
We study real and finite-dimensional supermanifolds $M = (|M|, \cO_M)$, equipped with a triple of vector fields $Q_1, Q_2 \in \Vect(M)$ odd and $P \in \Vect(M)$ even, that under the standard $\Z_2$-graded Lie bracket, satisfy the Lie superalgebra \eqref{eqn:SupLieAlg}. We set $C^\infty(M):= \cO_M(|M|)$ and speak of functions. As we have a pair of homological vector fields, $Q_i$ ($i = 1,2$), i.e., they `square to zero', we have two homological structures on $M$ (see Schwarz \cite[Section 3]{Schwarz:1993}). As there are two supercharges, there are two modular classes that control the existence of an invariant Berezin volume. \par
The first systematic studies of characteristic classes of Q-manifolds, including the modular class, were by Lyakhovich \& Sharapov \cite{Lyakhovich:2004} and Lyakhovich, Mosman, \& Sharapov \cite{Lyakhovich:2008,Lyakhovich:2010}. The author reviewed modular classes of Q-manifolds \cite{Bruce:2017}, which we will draw heavily from. Modular classes were first defined for Poisson manifolds and were subsequency extenced to Lie algebroids; see the review  article of  Kosmann-Schwarzbach \cite{Kosmann-Schwarzbach:2008} for more details.
 \par  
Recall that the divergence of a vector field $X$ is given defined by $\p \,\Div_{\p} X:=  L_X \p$, where $\p$ is a chosen Berezin volume, and satisfies 
\begin{subequations}
\begin{align}
\label{eqn:DivPropsa} &\Div_{\p} (f \,X) =  f\, \Div_{\p} X + (-1)^{\widetilde{f} \, \widetilde{X}}\, X(f)\,, \\
\label{eqn:DivPropsb} &\quad \Div_{\p'}X = \Div_{\p}X + X(g)\,,\\ 
\label{eqn:DivPropsc}&\Div_{\p}[X,Y] = X \Div_{\p} Y - (-1)^{\widetilde{X}\, \widetilde{Y}} Y \Div_{\p}X\,,
\end{align}
\end{subequations}
To clarify the notation used, for all $X,Y, \in \Vect(M)$, $f \in C^\infty(M)$,  and $\p' =\rme^g \, \p$ with $g \in C^\infty(M)$ even. For local expressions, the reader may consult \cite{Bruce:2017}. 
\begin{definition}
 An \emph{$\mathcal{N} =2$ supermanifold with a volume} is a quintuple $(M, Q_i, P, \p)$, where $M$ is a supermanifold,  $(Q_1, Q_2, P)$ is a set of vector fields that satisfy the  $\mathcal{N}=2$ $d=1$ supertranslation algebra \eqref{eqn:SupLieAlg}, and $\p$ is a Berezin volume.
\end{definition}
On any $\mathcal{N} =2$ supermanifold with a volume, there is a pair of standard cohomologies on functions (see \cite{Lyakhovich:2008,Lyakhovich:2010})
$$\Hom_{\mathrm{st}}(M, Q_i) := \frac{\{Q_i-\textnormal{closed functions}\}}{\{Q_i-\textnormal{exact functions}\}}\,,$$
and so a pair of modular classes 
$$\mathsf{Mod}(M, Q_i) := [\Div_{\p} Q_i]_{\mathrm{st}} \in\Hom_{\mathrm{st}}(M, Q_i) \,.$$
As $\{Q_1, Q_2\}=P \neq 0$, we do not have a bicomplex; or in supergeometric terms, we do not have a double Q-manifold\footnote{By definition, a double Q-manifold is a supermanifold equipped with two homological vector fields that commute.}. Furthermore, the modular classes $\mathsf{Mod}(M, Q_i)$ can easily be shown to be independent of the choice of Berezin volume using \eqref{eqn:DivPropsa}.  This directly implies that one can work locally using coordinates when evaluating the modular classes. It is essential to note that the modular class of a Q-manifold is a global invariant that can be meaningfully captured locally; this fact makes the modular class particularly convenient. \par
The vanishing of the modular class $\mathsf{Mod}(M, Q_i)$ is a necessary and sufficient condition for the existence of a Berezin volume that is invariant under the action of the supercharge $Q_i$. In particular, the divergence of the supercharge $Q_i$ vanishes for such a Berezin volume.  However, the vanishing of both modular classes is a necessary, but \ul{not} a sufficient condition for the existence of a Berezin volume that is invariant under both supercharges.
\begin{theorem}[Cohomological Coherence Criterion]\label{thm:BothInv}
Let $(M, Q_i, P, \p)$ be a $\mathcal{N} =2$ supermanifold with a volume. Assume that both modular classes $\mathsf{Mod}(M, Q_i)$ vanish, i.e., there exists Berezin volumes $\p_i$, such that $\Div_{\p_i}Q_i =0$. Then there exists a Berezin volume $\p'$ such that  $\Div_{\p'}Q_i =0$ ($i = 1,2$) if and only if  there exist an even function $\psi \in C^\infty(M)$ such that
$$\Div_{\p_1} Q_2 = Q_2(\psi)\,, \quad \textnormal{and} \qquad Q_1(\psi) =0\,.$$
\end{theorem}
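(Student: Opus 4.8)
The plan is to reduce the statement to a one-line computation built on the change-of-volume rule \eqref{eqn:DivPropsb}, after fixing the right parametrization of Berezin volumes. First I would fix the volume $\p_1$, which by hypothesis satisfies $\Div_{\p_1}Q_1 = 0$. Since, as in \eqref{eqn:DivPropsb}, any two Berezin volumes on $M$ differ by a factor $\rme^{g}$ with $g \in C^\infty(M)$ even, I write an arbitrary candidate volume as $\p' = \rme^{g}\,\p_1$. Applying \eqref{eqn:DivPropsb} to each supercharge then gives at once
\begin{equation*}
\Div_{\p'}Q_1 = \Div_{\p_1}Q_1 + Q_1(g) = Q_1(g)\,, \qquad \Div_{\p'}Q_2 = \Div_{\p_1}Q_2 + Q_2(g)\,,
\end{equation*}
so that $\p'$ is invariant under both supercharges exactly when $Q_1(g)=0$ and $\Div_{\p_1}Q_2 = -Q_2(g)$.

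The two implications then follow by setting $\psi := -g$. For the ``only if'' direction, a simultaneously invariant $\p' = \rme^{-\psi}\p_1$ forces, via the two displayed identities, both $Q_1(\psi)=0$ and $\Div_{\p_1}Q_2 = Q_2(\psi)$. Conversely, given an even $\psi$ satisfying these two conditions, the volume $\p' := \rme^{-\psi}\,\p_1$ satisfies $\Div_{\p'}Q_1 = -Q_1(\psi) = 0$ and $\Div_{\p'}Q_2 = \Div_{\p_1}Q_2 - Q_2(\psi) = 0$, as required. The parities are consistent throughout: $g$ and $\psi$ are even, while $Q_i(\psi)$ and each $\Div_{\p'}Q_i$ are odd.

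Because the calculation is essentially immediate, I expect the only genuine subtlety to be conceptual rather than computational: explaining why the extra condition is both necessary and self-consistent. Since the modular class $\mathsf{Mod}(M,Q_2)$ is independent of the chosen volume, its vanishing means $\Div_{\p_1}Q_2$ is $Q_2$-exact; indeed, writing $\p_1 = \rme^{h}\p_2$ and using \eqref{eqn:DivPropsb} gives $\Div_{\p_1}Q_2 = Q_2(h)$ directly. Hence the equation $\Div_{\p_1}Q_2 = Q_2(\psi)$ is always solvable on its own, with $\psi$ determined only up to a $Q_2$-closed even function; the real obstruction is whether this residual freedom can simultaneously enforce $Q_1(\psi)=0$. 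This is precisely the \emph{coherence} between the two homological structures that the theorem isolates, and it is not implied by the separate vanishing of the two modular classes. I would close by recording the consistency check coming from the algebra: applying $Q_1$ to $\Div_{\p_1}Q_2 = Q_2(\psi)$ and using $Q_1(\psi)=0$ together with \eqref{eqn:DivPropsc} applied to $\{Q_1,Q_2\}=P$ (which yields $\Div_{\p_1}P = Q_1(\Div_{\p_1}Q_2)$) produces the derived relation $\Div_{\p_1}P = P(\psi)$, showing how the role of $P$ sits behind the two stated conditions.
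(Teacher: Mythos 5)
Your proof is correct and follows essentially the same route as the paper's: both directions rest on the change-of-volume rule \eqref{eqn:DivPropsb} applied to $\p' = \rme^{-\psi}\,\p_1$, with your parametrization $\p' = \rme^{g}\,\p_1$ merely packaging the paper's two separate computations into one. Your closing observations (solvability of $\Div_{\p_1}Q_2 = Q_2(\psi)$ from the vanishing of $\mathsf{Mod}(M,Q_2)$, and the derived relation $\Div_{\p_1}P = P(\psi)$) are correct supplements beyond what the paper records, but the core argument is identical.
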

\begin{proof}\
\begin{description}
    \item[If]  Assuming that $\Div_{\p_1} Q_2 = Q_2(\psi)$ and $Q_1(\psi)=0$ for some even function $\psi \in C^\infty(M)$. We then claim $\p':= \rme^{- \psi}\, \p{_1}$ is the required Berezin volume.  By construction, we have 
    $$\Div_{\p'} Q_2= \Div_{\p_1} Q_2 - Q_2(\psi) =0\,.$$
    Similarly, as $\Div_{\p_1} Q_1 =0$, we have
    $$\Div_{\p'} Q_1= \Div_{\p_1} Q_1 - Q_1(\psi) = - Q_1(\psi) =0\,.$$
    \item[Only if] Assuming that such a Berezin volume $\p'$ exists, $\Div_{\p'}Q_1=0$ implies that $\p_1 =  \rme^{\psi}\, \p'$ for some even $\psi \in C^\infty(M)$. As $\Div_{\p_1}Q_1 =0$,
    $$\Div_{\p'}Q_1 =  \Div_{\p_1}Q_1 - Q_1(\psi) =- Q_1(\psi)=0\,,$$
    and so $Q_1(\psi)=0$. As $\Div_{\p'} Q_2 =0$, we have
    $$\Div_{\p_1} Q_2 =  \Div_{\rme^\psi\,\p'}Q_2 = \Div_{\p'} Q_2 + Q_2(\psi) = Q_2(\psi)\,.$$
\end{description}
\end{proof}
Theorem \ref{thm:BothInv} is symmetric under the exchange of $1 \leftrightarrow 2$ due to the form of the Lie superalgebra \eqref{eqn:SupLieAlg}.  In particular, if we have 
\begin{align*}
\Div_{\p} Q_1 = Q_1(\psi_1)\,,& \qquad Q_2(\psi_1) =0\,,&\textnormal{and}&&
\Div_{\p} Q_2 = Q_2(\psi_2)\,,& \qquad Q_1(\psi_2) =0\,,
\end{align*}
then $\p' = \rme^{-(\psi_1 + \psi_2)}\, \p$ provides the (generally non-unique) Berezin volume  of Theorem \ref{thm:BothInv}. Clearly, $\Div_{\p'}Q_i =0$ directly implies $L_{Q_i} \p' =0$  and so the Berezin volume $\p'$ is `supersymmetric'. Locally, using coordinates and assuming the conditions of the theorem, we can construct $\p'$ by using the coordinate Berezin volume and explicitly find suitable $\psi_1$ and $\psi_2$. \par 
If the conditions of Theorem \ref{thm:BothInv} hold, then 
$$\Div_{\p'} Q_\pm = \Div_{\p'} Q_1 \pm \Div_{\p'}Q_2 = 0\,,$$
In summary and recalling the algebra \eqref{eqn:SupLieAlg2}, we have the following corollary.
\begin{corollary}
Under the conditions of Theorem  \ref{thm:BothInv}, there exists a Berezin volume $\p'$ that is invariant under the action of the supercharges $Q_\pm$, i.e., $L_{Q_\pm} \p' =0$.
\end{corollary}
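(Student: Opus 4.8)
The plan is to obtain the statement as a formal consequence of Theorem~\ref{thm:BothInv}, exploiting only the $\R$-linearity of the divergence operator together with the change of generators $Q_\pm = Q_1 \pm Q_2$ that underlies the passage from \eqref{eqn:SupLieAlg} to \eqref{eqn:SupLieAlg2}. First I would unwind the phrase ``under the conditions of Theorem~\ref{thm:BothInv}'': this means that both modular classes vanish and that the coherence condition is met, so by the \textbf{If} direction of that theorem there is a single Berezin volume $\p'$ with $\Div_{\p'} Q_1 = \Div_{\p'} Q_2 = 0$ (explicitly $\p' = \rme^{-\psi}\,\p_1$, or $\rme^{-(\psi_1+\psi_2)}\,\p$ in the symmetric formulation given just above the corollary).

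The second step is to transfer this simultaneous invariance to the rotated supercharges. The defining relation $\p'\,\Div_{\p'} X = L_X \p'$ is $\R$-linear in $X$, because the assignment $X \mapsto L_X \p'$ is; consequently $\Div_{\p'}(Q_1 \pm Q_2) = \Div_{\p'} Q_1 \pm \Div_{\p'} Q_2$. Substituting the vanishing divergences from the first step yields $\Div_{\p'} Q_\pm = 0$. Reading the defining relation backwards — legitimate since a Berezin volume is a nowhere-vanishing section of the Berezinian, so that $\Div_{\p'} Q_\pm = 0$ forces $L_{Q_\pm}\p' = 0$ — delivers the claimed invariance of $\p'$ under both $Q_+$ and $Q_-$.

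I do not expect a genuine obstacle here: all of the analytic content resides in Theorem~\ref{thm:BothInv}, and the corollary is purely formal. The one point deserving a word of care is the linearity step, where one must distinguish the $\R$-linearity of $X \mapsto \Div_{\p'} X$ (which is what drives the argument) from its \emph{failure} to be $C^\infty(M)$-linear, as recorded in the Leibniz rule \eqref{eqn:DivPropsa}. Since $Q_\pm$ are constant-coefficient combinations of $Q_1$ and $Q_2$, only the former property is invoked, and the short computation $\Div_{\p'} Q_\pm = \Div_{\p'} Q_1 \pm \Div_{\p'}Q_2 = 0$ displayed immediately before the corollary makes this explicit.
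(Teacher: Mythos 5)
Your proposal is correct and matches the paper's own argument: the displayed computation $\Div_{\p'} Q_\pm = \Div_{\p'} Q_1 \pm \Div_{\p'} Q_2 = 0$ immediately preceding the corollary is exactly the paper's proof, with the invariant volume $\p'$ supplied by the \textbf{If} direction of Theorem~\ref{thm:BothInv} and invariance $L_{Q_\pm}\p' = 0$ read off from the defining relation $\p'\,\Div_{\p'}X = L_X\p'$. Your added remark distinguishing $\R$-linearity of $X \mapsto \Div_{\p'}X$ from its failure to be $C^\infty(M)$-linear is a careful justification of a step the paper takes implicitly, but it does not change the route.
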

The converse holds, in that if there is a Berezin volume that is invariant under the action of both $Q_\pm$, then the Berezin volume is also invariant under both $Q_1$ and $Q_2$. \par 
From the properties of the divergence operator \eqref{eqn:DivPropsc} we have
$$\Div_{\p}P = \Div_{\p}\{Q_1 , Q_2\} =Q_1(\Div_{\p} Q_2) + Q_2(\Div_{\p} Q_1)\,. $$\par
Thus, if there exists a Berezin volume $\p'$ that is invariant under both $Q_1$ and $Q_2$, then  $\Div_{\p'}P =0$, and so $L_P \p' =0$. \par 
In general, the volume is not preserved along $P$; however, the modular classes are well-behaved with respect to the action of $P$. More carefully, from \eqref{eqn:SupLieAlg}, we have $[P, Q_i] =0$. Thus,
$$\Div_{\p}[P, Q_i] = P(\Div_{\p}Q_i) - Q_i (\Div_{\p}P) =0\,.$$
Hence, $P(\Div_{\p}Q_i) = Q_i (\Div_{\p}P)$, which is $Q_i$-exact and so by definition, is zero in the respective standard cohomology. This implies that $P\big([\Div_{\p} Q_i]_{\mathrm{st}}\big)$ is zero in the standard cohomology of $Q_i$, respectively. Thus, the modular classes are invariant under the action of the even generator $P$.\par
We conclude this letter with simple illustrative examples. Recall that in (local) coordinates $x^a$ on a supermanifold, the divergence of a vector field is given by 
$$\Div_{\p} X = (-1)^{\widetilde{a}\,(\widetilde{X} +1)}\frac{1}{\rho}\frac{\partial(X^a\rho)}{\partial x^a}\,,$$
where (locally) $\p = D[x]\rho(x)$. Here $D[x]$ is the coordinate Berezin volume and $\rho$ is even and nowhere vanishing.  The Grassmann parity we denote as $\widetilde{x^a} := \widetilde{a} \in \Z_2$, and $\widetilde{X} \in \Z_2$.  For brevity, we will set $\mathsf{Mod}(Q_i) := \mathsf{Mod}(M, Q_i)$ as the underlying supermanifold will be clear.
\begin{example}
Consider $\R^{1|2}$ equipped with global coordinates $(t, \theta^1, \theta^2)$ and vector fields
$$Q_1 = \frac{\partial}{\partial \theta^1} + \frac{1}{2} \theta^2 \frac{\partial}{\partial t}\,,\qquad Q_2 = \frac{\partial}{\partial \theta^2} + \frac{1}{2} \theta^1 \frac{\partial}{\partial t}\,, \qquad P = \frac{\partial}{\partial t}\,.$$
The reader can quickly confirm that these vector fields satisfy \eqref{eqn:SupLieAlg}. As the modular classes are independent of the chosen Berezin volume, we can pick the coordinate Berezin volume $\p = D[t, \theta^2, \theta^1]$. Then 
\begin{align*}
& \Div_{\p} Q_1 =  \frac{\partial}{\partial\theta^1}(1) + \frac{\partial}{\partial\theta^2}(0) + \frac{\partial}{\partial t}\big(  \half \theta^2\big) =0\,, 
&\Div_{\p} Q_2 =  \frac{\partial}{\partial\theta^1}(0) + \frac{\partial}{\partial\theta^2}(1) + \frac{\partial}{\partial t}\big(  \half \theta^1\big) =0\,.
\end{align*}
Thus, both $\mathsf{Mod}(Q_i)$ vanish and so there exists an invariant Berezin volume; in this case, the coordinate Berezin volume is invariant.
\end{example}
\begin{example}
Consider $\R^{1|2}$ equipped with global coordinates $(t, \theta^1, \theta^2)$   and the vector fields
$$Q_1 = \frac{\partial}{\partial \theta^1} + \mu(t) \theta^2 \frac{\partial}{\partial t}\,, \qquad Q_2 = \frac{\partial}{\partial \theta^2}\,,\qquad P = \mu(t)\frac{\partial}{\partial t}\,.$$
As the modular classes are independent of the chosen Berezin volume, we can pick the coordinate volume $\p = D[t, \theta^1 , \theta^2]$. Then 
\begin{align*}
    & \Div_{\p} Q_1 = \frac{\partial}{\partial \theta^1}(1) + \frac{\partial}{\partial \theta^2}(0) + \frac{\partial}{\partial t}(\mu(t) \theta^2) = \mu'(t)\theta^2 = Q_1\big(\log(\mu(t))\big)\,, \\
    & \Div_{\p} Q_2=\frac{\partial}{\partial \theta^1}(0) + \frac{\partial}{\partial \theta^2}(1) + \frac{\partial}{\partial t}(0) =0\,,
\end{align*}
where we have taken $\mu(t) > 0$ for all $t \in \R$. Thus, both the modular classes vanish, i.e., $\mathsf{Mod}(Q_i)=0$. Moreover, we observe that $Q_2(\log(\mu(t))) =0$ and so we can find a Berezin volume that is invariant under both supercharges. Specifically, $\p' = \rme^{- \log(\mu(t))}\, \p$ is such a Berezin volume.
\end{example}  
\begin{example}\label{exa:R22}
Consider $\R^{2|2}_{>0}$ equipped with global coordinates $(x,y, \theta^1, \theta^2)$,  by definition we have $x,y > 0$. We equip this supermanifold with the vector fields
$$Q_1 = \frac{\partial}{\partial \theta^1} + \theta^2 x \frac{\partial}{\partial x}\,, \qquad Q_2 = \frac{\partial}{\partial \theta^2} + \theta^1 y \frac{\partial}{\partial y}\,, \qquad P  = x \frac{\partial}{\partial x} +y \frac{\partial}{\partial y}\,.$$
The reader can quickly confirm that these vector fields satisfy \eqref{eqn:SupLieAlg}. As the modular classes are independent of the chosen Berezin volume, we can pick the coordinate Berezin volume $\p = D[x,y, \theta^1, \theta^2]$. Then
\begin{align*}
& \Div_{\p}Q_1 =  \frac{\partial}{\partial \theta^1}(1) + \frac{\partial}{\partial \theta^2}(0) + \frac{\partial}{\partial x}(\theta^2 x) + \frac{\partial}{\partial y}(0)   = \theta^2\,,  \\
& \Div_{\p}Q_2 =  \frac{\partial}{\partial \theta^1}(0) + \frac{\partial}{\partial \theta^2}(1) + \frac{\partial}{\partial x}(0) + \frac{\partial}{\partial y}(\theta^1 y)   = \theta^1\,.
\end{align*}
Note that
\begin{align*}
&\theta^2 = Q_1(a\,\theta^1 \theta^2 + b\, \ln x + f(y))  \implies  \mathsf{Mod}(Q_1) =0\,,\\
&\theta^1 = Q_2(c\,\theta^2 \theta^1 + d\, \ln y +g(x)) \implies  \mathsf{Mod}(Q_2) =0\,,
\end{align*}
with $a,b,c,d \in R$ such that $a+b = 1$, and $c+d =1$.  There exists invariant Berezin volumes $\p_i$ such that $\Div_{\p_i}Q_i =0$. ($i= 1,2$). We explicitly see that $\p_1 =  \rme^{-(a \, \theta^1 \theta^2 + b \ln x + f(y))\theta^1 \theta^2} \, \p$, where $\p$ is the coordinate Berezin volume. Checking this, we have 
\begin{align*}
    & \Div_{\p_1}Q_1 =  \theta ^2 - Q_1( a \, \theta^1 \theta^2 + b \ln x + f(y)) = \theta^2 - a \, \theta^2 - b\, \theta^2 =0\,,\\
    & \Div_{\p_1} Q_2 = \theta^1 + Q_2 (a \, \theta^1 \theta^2 + b \ln x + f(y))=  \theta^1 +\theta^1 =\theta^1 + a\, \theta^1 - \theta^1 \, y  f'(y)\,.
\end{align*}
Setting $\psi := (1+a) \theta^2 \theta^1 - f(y)$, we observe that $\Div_{\p_1} Q_2 = Q_2(\psi)$. However, $Q_1(\psi) =  -(1+a)\theta^2$, and so $\psi$  is $Q_1$-closed if and only if $a= -1$, which implies that $b= 2$. Thus, in light of Theorem \ref{thm:BothInv}, we know there exists a Berezin volume that invariant under the action of both the supercharges. We modify $\p_1$ using $\psi$ by gives 
$$\p' = \rme^{\theta^1 \theta^2 - 2 \ln x}\p\,,$$
as an invariant Berezin volume. However, this is not unique  and noting the symmetry of the vector fields under $x \leftrightarrow y$ and $\theta^1 \leftrightarrow \theta^2$ tells us that 
$$\p'' = \rme^{\theta^2 \theta^1 - 2 \ln y}\p\,,$$
is another invariant Berezin volume.
\end{example}
\begin{example}
Consider $\R^{2|2}$ equipped with global coordinates $(x,y, \theta^1, \theta^2)$ and vector fields
$$Q_1 = \frac{\partial}{\partial \theta^1} + \theta^2 x \frac{\partial}{\partial x}\,, \qquad Q_2 = \frac{\partial}{\partial \theta^2} + \theta^1 y \frac{\partial}{\partial y}\,, \qquad P  = x \frac{\partial}{\partial x} +y \frac{\partial}{\partial y}\,.$$
 The reader can quickly confirm that these vector fields satisfy \eqref{eqn:SupLieAlg}. As the modular classes are independent of the chosen Berezin volume, we can pick the coordinate Berezin volume $\p = D[x,y, \theta^1, \theta^2]$. Then
\begin{align*}
& \Div_{\p}Q_1 =  \frac{\partial}{\partial \theta^1}(1) + \frac{\partial}{\partial \theta^2}(0) + \frac{\partial}{\partial x}(\theta^2 x) + \frac{\partial}{\partial y}(0)   = \theta^2\,,  \\
& \Div_{\p}Q_2 =  \frac{\partial}{\partial \theta^1}(0) + \frac{\partial}{\partial \theta^2}(1) + \frac{\partial}{\partial x}(0) + \frac{\partial}{\partial y}(\theta^1 y)   = \theta^1\,.
\end{align*}
Note that
\begin{align*}
&\theta^2 = Q_1(\theta^1 \theta^2 +f(y)) \implies  \mathsf{Mod}(Q_1) =0\,,
&&\theta^1 = Q_2(\theta^2 \theta^1+ g(x)) \implies \mathsf{Mod}(Q_2) =0\,.
\end{align*}
Note that, unlike Example \ref{exa:R22}, we do not have the terms $\ln x$ and $\ln y$, as these are not smooth functions functions on $\R$.  Thus, there exists invariant Berezin volumes $\p_i$ such that $\Div_{\p_i}Q_i =0$. ($i= 1,2$). We explicitly see that $\p_1 =  \rme^{-(\theta^1 \theta^2 + f(y)} \, \p$, where $\p$ is the coordinate Berezin volume. Checking this, we have 
\begin{align*}
    & \Div_{\p_1}Q_1 =  \theta ^2 - Q_1( \theta^1 \theta^2 + f(y) )= \theta^2 - \theta^2 =0\,,\\
    & \Div_{\p_1} Q_2 = \theta^1 - Q_2 (\theta^1 \theta^2 + f(y))=  \theta^1 +\theta^1 = 2 \, \theta^1 - \theta^1 \, y f'(y)\,.
\end{align*}
Setting $\psi := 2 \theta^2 \theta^1- f(y)$, we observe that $\Div_{\p_1} Q_2 = Q_2(\psi)$. However, $Q_1(\psi) =  -2\, \theta^2 \neq 0$, and so $\psi$  is not $Q_1$-closed. Thus, in light of Theorem \ref{thm:BothInv}, we know there cannot be a single Berezin volume that is invariant under both $Q_1$ and $Q_2$.
\end{example}
%
%%%%%%%%%%%%%%%%%%%%%%%
%
\section*{Acknowledgements}
The author cordially thanks Francesco Toppan for comments on earlier drafts of this note.
%
%%%%%%%%%%%%%%%%%%%%%%%
%


\begin{thebibliography}{10}
\begin{small}

\bibitem{Baulieu:2015}
Baulieu, L., Holanda, N. L. \& Toppan, F.,
A world-line framework for 1D topological conformal $\sigma$-models, \href{https://doi.org/10.1063/1.4935851}{\emph{J. Math. Phys.}} \textbf{56}, No. 11, 113507, 16 p. (2015). 

\bibitem{Bruce:2017}
Bruce, A. J.,
Modular classes of Q-manifolds: a review and some applications,
\href{http://dx.doi.org/10.5817/AM2017-4-203}{\emph{Arch. Math., Brno}} \textbf{53}, No. 4, 203--219 (2017).

\bibitem{Junker:2019}
Junker, G.,
\href{https://doi.org/10.1088/2053-2563/aae6d5}{Supersymmetric methods in quantum, statistical and solid state physics}, IOP Expanding Physics. Bristol: IOP Publishing (ISBN 978-0-7503-2024-5/hbk; 978-0-7503-2026-9/ebook). xviii, 249 p (2019).

\bibitem{Lyakhovich:2004}
Lyakhovich, S. L. \&  Sharapov, A. A.
Characteristic classes of gauge systems,
\href{https://doi.org/10.1016/j.nuclphysb.2004.10.001}{\emph{Nucl. Phys., B}} \textbf{703}, No. 3, 419--453 (2004).

\bibitem{Lyakhovich:2008}
Lyakhovich, S. L., Mosman, E. A. \&  Sharapov, A. A.,
On characteristic classes of Q-manifolds,
\href{https://doi.org/10.1007/S10688-008-0010-X}{\emph{Funct. Anal. Appl.}} \textbf{42}, No. 1, 75--77 (2008)

\bibitem{Lyakhovich:2010}
Lyakhovich, S. L., Mosman, E. A.` \& Sharapov, A. A.,
Characteristic classes of Q-manifolds: classification and applications,
\href{https://doi.org/10.1016/j.geomphys.2010.01.008}{\emph{J. Geom. Phys.}} \textbf{60}, No. 5, 729--759 (2010).

\bibitem{Kosmann-Schwarzbach:2008}
Kosmann-Schwarzbach, Y.,
Poisson manifolds, Lie algebroids, modular classes: a survey,
\href{https://doi.org/10.3842/SIGMA.2008.005}{\emph{SIGMA}} \textbf{4},  Paper 005, 30 p. (2008). 

\bibitem{Nicolai:1976}
Nicolai, H.,
Supersymmetry and spin systems,
\href{https://doi.org/10.1088/0305-4470/9/9/010}{\emph{J. Phys. A: Math. Gen.}} \textbf{9} 1497 (1976).

\bibitem{Pashnev:2001}
Pashnev, A. \& Toppan, F.
On the classification of $N$-extended supersymmetric quantum mechanical systems, \href{https://doi.org/10.1063/1.1409349}{\emph{J. Math. Phys.}} \textbf{42}, No. 11, 5257--5271 (2001). 

\bibitem{Schwarz:1993}
Schwarz, A.,
Semiclassical approximation in Batalin-Vilkovisky formalism,
\href{https://doi.org/10.1007/BF02108080}{\emph{Commun. Math. Phys.}} \textbf{158}, No. 2, 383--396 (1993).

\bibitem{Witten:1982}
Witten. E,
Constraints on supersymmetry breaking,
\href{https://doi.org/10.1016/0550-3213(82)90071-2}{\emph{Nucl. Phys. B.}} \textbf{202}, no. 2, 253--316 (1982).

\end{small}
\end{thebibliography}
\end{document}